\newcommand{\BIT}{\begin{itemize}}
\newcommand{\EIT}{\end{itemize}}
\newcommand{\BNUM}{\begin{enumerate}}
\newcommand{\ENUM}{\end{enumerate}}
\def\reals{\mathbb{R}} 
\def\complex{\mathbb{C}} 
\renewcommand{\exp}[1]{\operatorname{exp}\left(#1\right)} 
\def\E{\mathbb{E}} 
\def\Unif{\textnormal{Unif}}
\newtheorem{theorem}{Theorem}
\newtheorem{lemma}{Lemma}
\newtheorem{assumption}{Assumption}
\newtheorem{corollary}{Corollary}
\newcommand{\putFig}[3]{
        \begin{figure}[h!] 
 		\centering
 		\vspace{-2ex}
 		\includegraphics[width=#3]{figs/#1}
		  \caption{#2}
                \label{fig:#1}
                \vspace{-4ex}
        \end{figure} }
\newcommand{\sset}{\mathcal{S}}
\newcommand{\mv}{\Delta\mathbf{v}}
\begin{document}
%
\title{Fast Distribution Grid Line Outage Identification with $\mu$PMU}


\author{Yizheng~Liao,~\IEEEmembership{Student Member,~IEEE,}
        Yang~Weng,~\IEEEmembership{Member,~IEEE,}
        Chin-Woo~Tan,
        Ram~Rajagopal,~\IEEEmembership{Member,~IEEE}
        \vspace{-4ex}
\thanks{Y. Liao, C-W.Tan, R. Rajagopal are with Department of Civil and Environmental
Engineering, Stanford University, Stanford, CA, 94305 USA e-mail: (\{yzliao, tancw,
ramr\}@stanford.edu). Y. Weng is with School of Electrical, Computing, and
Energy Engineering, Arizona State University, Tempe, AZ, 85287 USA e-mail:
yang.weng@asu.edu.}}


\maketitle

\begin{abstract}

%
The growing integration of distributed energy resources (DERs) in urban distribution grids raises various reliability issues due to DER's uncertain and complex behaviors. With a large-scale DER penetration, traditional outage detection methods, which rely on customers making phone calls and smart meters' ``last gasp'' signals, will have limited performance, because the renewable generators can supply powers after line outages and many urban grids are mesh so line outages do not affect power supply. To address these drawbacks, we propose a data-driven outage monitoring approach based on the stochastic time series analysis from micro phasor measurement unit ($\mu$PMU). Specifically, we prove via power flow analysis that the  dependency of time-series voltage measurements exhibits significant statistical changes after line outages. This makes the theory on optimal change-point detection suitable to identify line outages via $\mu$PMUs with fast and accurate sampling. However, existing change point detection methods require post-outage voltage distribution unknown in distribution systems. Therefore, we design a maximum likelihood-based method to directly learn the distribution parameters from $\mu$PMU data. We prove that the estimated parameters-based detection still achieves the optimal performance, making it extremely useful for distribution grid outage identifications. Simulation results show highly accurate outage identification in eight distribution grids with 14 configurations with and without DERs using $\mu$PMU data.
\end{abstract}


%
\IEEEpeerreviewmaketitle
\vspace{-4ex}
\section{Introduction}
The ongoing large-scale integration of distributed energy resources (DERs) makes photovoltaic (PV) power devices (renewable generation), energy storage devices, and electric vehicles ubiquitous. Such a change transitions the urban power grid into sustainable network and reduces the electricity cost and transmission loss \cite{sce2015application}. However, such a change also raises fundamental challenges in system operations. For example, the reverse power flow from residential houses renders the existing protective architecture inadequate. Also, frequent plug-and-charge electric vehicles will worsen power quality, causing transformer overload and voltage flickers \cite{clement2010impact}. Because of these changes on distribution grid, even a small-scale DER integration can disable a distribution grid \cite{dey2010urban}. 
\cite{greentech} shows that the distribution power outages or blackouts caused by newly added uncertainties can cause a loss of thousands to millions of dollars within one-hour, calling for newly designed fault diagnosis approach for distribution grid operation. 

The traditional power outage analysis in distribution grids relies on passive feedback from customer calls. Collected into Customer Information System (CIS), such information is processed in the Outage Management System (OMS) for sending field crews to identify and repair the outage. Due to the human-in-the-loop system design, delay and imprecise outage information causes inefficient detection and slow restoration. Therefore, smart meters with advanced metering infrastructure (AMI) capability were installed recently to send a ``last gasp'' message when there is a loss of power \cite{luan201417}.  \cite{doe2014fault} shows additional fault location, isolation, and service restoration (FLISR) technologies to reduce some negative impact and the interruptions duration.

However, the traditional methods and the recent approaches above will have limited performance during the DER integration. For example, if rooftop solar panels are installed at the customer's premises, the customer can still receive power from renewable generators when there is no power flow in the distribution circuit connecting to the premises. So the (AMI) smart meter at the customer premises cannot report a power outage. Also, the secondary distribution grids are mesh networks in metropolitan areas \cite{rudin2012machine}, making a line outage unnecessarily cause a power outage. However, it is still important to detect, localize, and identify the out-of-service branches for the situation awareness of distribution system operators.

In transmission grids, there have been works that utilize phasor measurement units (PMUs) to identify line outages. For example, phase changes across all buses are compared with potential fault events in \cite{tate2008line}. In \cite{he2010fault}, a transmission grid is formulated as a graphical model and phase angles are used to track the grid topology change. A regularized linear regression is employed to detect power outages in \cite{zhu2012sparse}. The approach in \cite{wei2012change} compares the branch admittance before and after outages. These methods, however, cannot be directly implemented for the distribution grid because $1)$ the DC approximation has poor performance in distribution grids as many systems have non-negligible line loss; $2)$ installing $\mu$PMUs at all buses in distribution grid is expensive and impractical; and $3)$ the topology information is unavailable or inaccurate in distribution grids, because many DERs do not belong to the utilities and their connectivities are unknown to the system operators \cite{liao2015distribution}. 


For resolving the issues above, we proposed a $\mu$PMU-based data-driven approach, where we model $\mu$PMU measurement at each bus as a random variable, so that the distribution grid is modeled as a multi-variate probability distribution. 
We show that a line outage will lead to a change of the statistical dependence between buses, and consequently, a change of the joint distribution. Hence, the outage can be discovered by detecting the change of the multivariate probability distribution. A well-known method to sequentially detect the probability distribution change is change point detection method \cite{tartakovsky2005general}. 
The change point detection methods have been applied to detect outage in transmission grids \cite{chen2016quickest,wei2012change,rovatsos2017statistical}. However, some requirements in the transmission line outage detection do not hold in distribution grids. 
For example, the change-point detection method requires the post-outage probability distribution. Unfortunately, in practice, this probability distribution is hard to obtain in distribution grids because the number of possible post-outage distributions increases exponentially with the growth of the grid size. To overcome this drawback, we propose a maximum likelihood method to learn the unknown post-outage probability distribution parameters from the $\mu$PMU data. 

For localizing the outage location, 
we firstly prove that the voltages of two disconnected buses are conditionally independent, which is subsequently used to find the line outage without knowing the post-outage probability distribution. 
In case that we do not have $\mu$PMU at every bus, we will first use available $\mu$PMU to make fast event detection. Then, relative location will be provided via a network reduction model. Finally, when the slower smart meter data arrives at buses without $\mu$PMUs, we can use the localization approach like the all-$\mu$PMU approach for highly accurate event localization.


The performance of our data-driven outage detection and localization algorithm is verified by simulations on the standard IEEE $8$- and $123$-bus distribution test cases \cite{kersting2001radial} and $6$ European distribution grids \cite{pretticodistribution} with $14$ network configurations. Real smart meter data collected from $110,000$ Pacific Gas and Electric Company (PG\&E) customers are utilized for generating $\mu$PMU data via data interpolation, different outage scenarios, and power flow analysis. 

The rest of the paper is organized as follows: Section~\ref{sec:model} introduces the modeling and the problem of the data-driven power outage detection and localization based on $\mu$PMUs. Section~\ref{sec:outage_detect} uses a proof to justify that the outage can be detected by change point detection method. Section~\ref{sec:outage_identify} presents the outage localization method. A detailed algorithm for outage detection and localization is illustrated as well. Section~\ref{sec:pmu} discusses how to perform outage identification with a limited amount of $\mu$PMUs available in distribution grids. Section~\ref{sec:num} evaluates the performance of the new method and Section~\ref{sec:con} concludes the paper.

\vspace{-2ex}
\section{System Model}\label{sec:model}
In order to formulate the power outage detection problem, we need to describe the distribution grid and its $\mu$PMU data. A distribution grid is defined as a physical network with buses and branches that connect buses. For a distribution grid with $M$ buses, we use $\mathcal{S}=\{1,2,\dots,M\}$ to represent the set of all bus indices. To utilize the time series data provided by $\mu$PMU, the voltage measurement at bus $i$ is modeled as a random variable $V_i$. We use $\mathbf{V}_\mathcal{S} = [V_1,V_2,\dots,V_M]^T$ to denote all voltage random variables in the network, where $T$ denotes the transpose operator. At the discrete time $n$, the noiseless voltage measurement at bus $i$ is $v_i[n] = |v_i[n]|\exp{j\theta_i[n]} \in \complex$, where $|v_i[n]| \in \reals$ denotes the voltage magnitude in per unit and $\theta_i[n] \in \reals$ denotes the voltage phase angle in degrees. All voltages are sinusoidal signals at the same frequency. We use $\mathbf{v}[n] = [v_1[n],v_2[n],\dots,v_M[n]]^T$ to denote a collection of all voltage measurements in a network at time $n$. Thus, $\mathbf{v}[n]$ is the realization of $\mathbf{V}_\sset$ at time $n$. Also, we use $\mathbf{v}^{1:N} = (\mathbf{v}[1],\mathbf{v}[2],\dots,\mathbf{v}[N])$ to denote a collection of all voltage  measurements in the network up to time $N$.



The problem to detect and localize line outages in a distribution grid is defined as follows:
\begin{itemize}
\item Problem: data-driven power outage detection and localization based on $\mu$PMU measurements
\item Given: a sequence of the historical voltage measurements $\mathbf{v}^{1:N}$ up to the current time $N$ from $\mu$PMUs
\item Find: (1) the outage time and (2) the branches that are out-of-service
\end{itemize}

\section{Optimal Distribution Grid Line Outage Detection}
\label{sec:outage_detect}
Voltage measurements usually have an irregular distribution and are hard to be used for our goal of this paper. Therefore, instead of using voltage measurements directly, we use the incremental change of the voltage measurements from $\mu$PMUs to detect outages, which is defined as $\mv[n] = \mathbf{v}[n] - \mathbf{v}[n-1]$. Accordingly, $\Delta\mathbf{v}^{1:N} = (\mv[1],\mv[2],\cdots,\mv[N])$. We use $\Delta V_i$ to represent the voltage change random variable at bus $i$ and $\Delta\mathbf{V}_\mathcal{S}$ to represent the voltage change random variables of the entire system. In the following, we will prove that, the probability distribution of $\Delta\mathbf{V}_\mathcal{S}$ will be different after an outage. In the following context, the operator $\backslash$ denotes the complement operator, i.e. $\mathcal{A}\backslash\mathcal{B} = \{i \in \mathcal{A}, i\notin \mathcal{B}\}$.

\begin{assumption}
\label{ass:indept}
	In distribution grids, 
	\begin{itemize}
		\item the incremental change of the current injection $\Delta I$ at each non-slack bus is independent, i.e., $\Delta I_i \perp \Delta I_k$ for all $i \neq k$,
		\item the incremental changes of the current injection $\Delta I$ and bus voltage $\Delta V$ follow Gaussian distribution with zero means and non-zero variances. 
	\end{itemize}
\end{assumption}

The Assumption \ref{ass:indept} has been adopted in many works, such as \cite{deka2015structure, bolognani2013identification, liao2018urban}. In \cite{liao2018urban}, the authors use real-data to validate both assumptions. With Assumption \ref{ass:indept}, we prove that the pairwise bus voltages are conditionally independent if there is no branch between them.

\begin{theorem}
\label{thm:cond_indept}
If the change of current injection at each bus is approximately independent and no branch connects bus $i$ and bus $j$, the voltage changes at bus $i$ and bus $j$ are conditionally independent, given the voltage changes of all other buses, i.e. $\Delta V_i \perp  \Delta V_j | \{\Delta V_e,e \in \sset \backslash\{i,j\}\}$.
\end{theorem}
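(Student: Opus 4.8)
\noindent\emph{Proof proposal.} The plan is to reduce the statement to a claim about the zero pattern of the precision (inverse covariance) matrix of the voltage increments, and then to invoke the standard Gaussian Markov random field fact: for a nondegenerate jointly Gaussian vector, two coordinates are conditionally independent given all the remaining coordinates if and only if the corresponding entry of the precision matrix is zero.

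First I would start from the linearized power-flow identity $\Iabc = \Yabc\,\Vabc$, in which $\Yabc$ is the (symmetric) bus admittance matrix whose off-diagonal entry $(i,k)$ is the negative series admittance of branch $(i,k)$, and hence vanishes precisely when no branch joins buses $i$ and $k$. Because the power-flow model holds the slack-bus voltage constant, I would delete its row and column to obtain the reduced matrix $\wt{\Yabc}$, which is still symmetric, inherits the same off-diagonal zero pattern, and is invertible for a connected grid; using a tilde to denote restriction to the non-slack buses, the increments obey $\wt{\Vabc} = \wt{\Yabc}^{-1}\wt{\Iabc}$. By Assumption~\ref{ass:indept}, $\wt{\Iabc}$ is zero-mean Gaussian with \emph{diagonal}, positive-definite covariance $\mathbf{D}$ (diagonal because the current injections are mutually independent, positive because their variances are nonzero), so $\wt{\Vabc}\sim\Gsn\!\left(\Bzero,\, \wt{\Yabc}^{-1}\mathbf{D}\,\wt{\Yabc}^{-1}\right)$ is a nondegenerate Gaussian with precision matrix $\Theta = \wt{\Yabc}^{\top}\mathbf{D}^{-1}\wt{\Yabc}$. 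Equivalently, the density of $\wt{\Vabc}$ factors into one Gaussian factor per bus, each depending only on that bus's increment and those of its graph neighbours (since every $\Delta I_k$ is a local linear combination of voltage increments and the $\Delta I_k$ are independent), so that the conditional independence can be read off from which pairs of increments never share a factor.

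It then remains to show that $\Theta_{ij} = \sum_{k}\wt{Y}_{ik}\,d_k^{-1}\,\wt{Y}_{kj}$ vanishes whenever no branch joins $i$ and $j$, where $d_k$ is the $k$th diagonal entry of $\mathbf{D}$. The contributions of $k=i$ and $k=j$ drop out at once, since each carries the factor $\wt{Y}_{ij}=\wt{Y}_{ji}=0$; the remaining terms then have to be dealt with using the additional structure of the distribution grid's admittance matrix. Once $\Theta_{ij}=0$ is established, the Gaussian Markov random field equivalence yields $\Delta V_i \perp \Delta V_j \mid \{\Delta V_e : e\in\sset\backslash\{i,j\}\}$ — additionally conditioning, vacuously, on the constant slack-bus increment — which is exactly the theorem. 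I expect this last cancellation to be the only genuinely non-mechanical step and hence the main obstacle: deriving the Gaussian law of $\wt{\Vabc}$ and quoting the precision-matrix characterization are routine, whereas showing that the $(i,j)$ entry of $\Theta$ vanishes for every non-adjacent pair is where the absence of a branch between $i$ and $j$, together with the grid topology, has to be used.
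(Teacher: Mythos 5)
Your setup is sound: with $\Delta\mathbf{V}=\widetilde{\mathbf{Y}}^{-1}\Delta\mathbf{I}$ and $\Delta\mathbf{I}\sim\mathcal{N}(\mathbf{0},\mathbf{D})$ diagonal, the precision of $\Delta\mathbf{V}$ is indeed $\Theta=\widetilde{\mathbf{Y}}^{\top}\mathbf{D}^{-1}\widetilde{\mathbf{Y}}$, and the Gaussian Markov random field equivalence is the right tool. The problem is that the one step you defer --- showing $\Theta_{ij}=0$ for every non-adjacent pair --- is not a technical remainder but the entire content of the claim, and as set up it does not go through. After the $k=i$ and $k=j$ terms drop out, what survives is
\[
\Theta_{ij}=\sum_{k\in\mathcal{N}(i)\cap\mathcal{N}(j)}\frac{Y_{ik}\,Y_{kj}}{d_k},
\]
a sum over common neighbours in which every summand is a product of two nonzero branch admittances divided by a positive variance; there is no cancellation mechanism, and for a pair joined by a single path $i\!-\!k\!-\!j$ the sum is one manifestly nonzero term. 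Since your reduction is an equivalence (``$\Theta_{ij}=0$ iff conditionally independent''), your route can only deliver the conclusion for buses whose neighbourhoods are disjoint, i.e.\ at graph distance at least three. The same issue undercuts your factorization remark: the Gaussian factor attached to a common neighbour $k$ contains the cross term $Y_{ki}Y_{kj}\Delta v_i\Delta v_j$, so $i$ and $j$ \emph{do} share a factor.

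The paper's own proof takes a more elementary, direct route: it conditions on $\Delta V_e=\Delta v_e$ for all $e\in\mathcal{S}\backslash\{i,j\}$, notes that since $j\notin\mathcal{N}(i)$ all neighbours of $i$ are then fixed, so the nodal balance gives $\Delta V_i=\bigl(\Delta I_i+\sum_{e\in\mathcal{N}(i)}\Delta v_eY_{ie}\bigr)/Y_{ii}$, an affine image of $\Delta I_i$ alone (symmetrically for $j$), and then invokes $\Delta I_i\perp\Delta I_j$. That last step, however, requires $\Delta I_i$ and $\Delta I_j$ to remain independent \emph{after} conditioning on $\Delta\mathbf{V}_{\mathcal{S}\backslash\{i,j\}}$ --- a strictly stronger statement than Assumption~\ref{ass:indept}, since the conditioning variable depends on all injections and so couples them --- and this is precisely what your precision-matrix computation contradicts whenever $i$ and $j$ share a neighbour. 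So the obstacle you flagged as ``the main obstacle'' is genuine, it is not removed by switching to the paper's argument, and closing it would require either restricting to pairs with $\mathcal{N}(i)\cap\mathcal{N}(j)=\emptyset$ or an additional structural assumption forcing the cross terms to vanish.
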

\begin{proof}
For bus $i$, the current and voltage relationship can be expressed as $\Delta I_i = \Delta V_iY_{ii} - \sum_{e \in \mathcal{N}(i)}\Delta V_eY_{ie}$ with $Y_{ii} = \sum_{e \in \mathcal{N}(i)}Y_{ie}$, where $Y_{ie}$ denote is the $ie$th element of the admittance matrix $Y$ and the neighbor set $\mathcal{N}(i)$ contains the indices of the neighbors of bus $i$, i.e., $\mathcal{N}(i) = \{e \in \mathcal{S}|Y_{ie} \neq 0\}$. If bus $i$ and bus $k$ are not connected, $j \notin \mathcal{N}(i)$ and $Y_{ij} = 0$.  Given $\Delta V_e = \Delta v_e$ for all $e \in \sset\backslash \{i,j\}$, the equation above becomes to
\begin{align}
	\Delta I_i &= \Delta V_iY_{ii} - \sum_{e \in \mathcal{N}(i)}\Delta v_eY_{ie}, \nonumber \\
	\Delta V_i &= \frac{1}{Y_{ii}} (\Delta I_i + \sum_{e \in \mathcal{N}(i)}\Delta v_eY_{ie}).\label{eq:VI} 
\end{align}
Similarly, $\Delta V_j = (\Delta I_j + \sum_{e \in \mathcal{N}(j)}\Delta v_eY_{je})/Y_{jj}$. With the assumption of the current change independence, i.e., $\Delta I_i \perp \Delta I_k$, $\Delta V_i$ and $\Delta V_k$ are conditionally independent given $\Delta \mathbf{V}_{\sset\backslash \{i,k\}}$.
\end{proof}

\putFig{outage_plot}{An example of nodal voltages before and after a line outage. $\lambda$ denotes the outage occurrence time.}{0.8\linewidth}

If a branch is out-of-service, its admittance becomes zero. The voltages at the two ends of this branch becomes conditionally independent. According to Assumption~\ref{ass:indept}, the bus voltage follows a Gaussian distribution. Therefore, $\Delta\mathbf{V}_\mathcal{S}$ follows a multivariate Gaussian distribution. After an outage, some elements of the mean vector and covariance matrix will change. Hence, the probability distribution of $\Delta\mathbf{V}_\mathcal{S}$ is different before and after an outage. Let $\lambda$ denote the time that an outage occurs. We assume that $\Delta\mathbf{V}_\mathcal{S}$ follow a Gaussian distribution $g$ with the mean $\mu_0$ and the covariance $\Sigma_0$ in the pre-outage status (i.e., $N \leq \lambda$) and the other Gaussian distribution $f$ with the mean $\mu_1$ and the covariance $\Sigma_1$ after any outage (i.e., $N > \lambda$). An example is illustrated in Fig.~\ref{fig:outage_plot}. One way to find the outage time $\lambda$ sequentially is to perform the following hypothesis test at each time $N$ \cite{tartakovsky2005general}:
\[
	\mathcal{H}_0: \lambda > N, \quad \mathcal{H}_1: \lambda \leq N.
\]
$\mathcal{H}_0$ represents that the pre-outage status and $\mathcal{H}_1$ represents that the outage has occurred before $N$ (post-outage status). Finding the outage time is known as the change point detection problem. Usually, the line outage occurrence time is unpredictable. Therefore, we assume the power outage time $\lambda$ as a discrete random variable with a probability mass function $\pi(\lambda)$. Now, we can use a Bayesian approach to find $\lambda$. In this paper, we assume $\lambda$ follows a geometric distribution with a parameter $\rho$. The joint distribution of $\lambda$ and $\Delta\mathbf{V}_\mathcal{S}$ can be written as
\[
P(\lambda,\Delta\mathbf{V}_\mathcal{S}) = \pi(\lambda) P(\Delta\mathbf{V}_\mathcal{S}|\lambda).
\]
When $\lambda = k$, all the data obtained before time $k$ follow the distribution $g$ and all the data obtained at and after time $k$ follow the distribution $f$. Therefore, the likelihood probability $P(\Delta\mathbf{V}_\mathcal{S}|\lambda)$ above is expressed as follows:
\[
P(\Delta\mathbf{V}_\mathcal{S} = \mv^{1:N}|\lambda = k) = \prod_{n=1}^{k-1}g(\mv[n])\prod_{n=k}^{N}f(\mv[n]),
\]
for $k = 1,2,\cdots,N+1$. When $\lambda = N+1$, it refers to the outage has not occurred and the data follow the distribution $g$.

Finding the outage time $\lambda$ is equivalent to finding the post-outage posterior probability $P(\mathcal{H}_1|\Delta\mathbf{V}_\mathcal{S}) = P(\lambda \leq N|\Delta\mathbf{V}_\mathcal{S} = \mv^{1:N})$ at each time $N$. If the posterior probability is large enough, we can declare an outage in the system. At each time $N$, 
\begin{align}
	& P(\lambda \leq N|\mv^{1:N}) = \sum_{k=1}^N\frac{P(\lambda = k, \mv^{1:N})}{P(\mv^{1:N})}, \nonumber \\
	=& \frac{1}{P(\mv^{1:N})}\sum_{k=1}^N \pi(\lambda = k)P(\mv^{1:N}|\lambda = k), \nonumber \\
	=& C\sum_{k=1}^N\pi(k)\prod_{n=1}^{k-1}g(\mv[n])\prod_{n=k}^{N}f(\mv[n]), \label{eq:post}
\end{align}
where $C$ is a normalization factor such that $\sum_{k=1}^{N+1}P(\lambda = k|\mv^{1:N}) = 1$. In the normal operation, $f(\mv[n])$ is small and $P(\lambda \leq N|\mv^{1:N})$ is small. Once an outage occurs at time $\lambda = k \leq N$, all data collected at $n \geq \lambda$ follow $f(\mv[n])$ and $P(\lambda \leq N|\mv^{1:N}$ becomes large. Hence, we can set a threshold and declare an outage when the posterior probability surpasses this threshold. This process is visualized in Fig.~\ref{fig:detect_plot}.
\putFig{detect_plot}{An example of outage detection based on the posterior probability. $\lambda$ is the outage occurrence time. $\tau$ is the outage detection time. The brown dashed line is the detection threshold.}{0.85\linewidth}

\vspace{-3ex}
\subsection{Optimal Outage Detection}
In the change point detection problem, there are two performance metrics: \textit{probability of false alarm} and \textit{expected detection delay}. The former metric is the probability that a detector falsely declares an outage in the normal operation. If $\tau$ denotes the time of an outage being detected, the probability of false alarm is defined as $P(\tau < \lambda)$. The latter metric describes the average latency that a detector finds the outage after it has occurred. The expected detection delay is defined as $E(\tau - \lambda | \tau \geq \lambda)$. For distribution grid line outage detection via $\mu$PMUs, we want to find the outage time $\lambda$ as quickly as possible with a constraint of the maximum probability of false alarm $\alpha$, i.e.,
\begin{equation}
\label{eq:obj}
\begin{aligned}
& \underset{\tau}{\text{minimize}}
& & E(\tau - \lambda | \tau \geq \lambda) \\
& \text{subject to}
& & P(\tau < \lambda ) \leq \alpha.
\end{aligned}
\end{equation}
By the Shiryaev-Roberts-Pollaks procedure \cite{pollak2009optimality}, we have the following lemma to solve the optimization problem in (\ref{eq:obj}).

\begin{lemma}
\label{thm:detect_rule}
	Given a maximum probability of false alarm $\alpha$, the following detection rule
	\begin{equation}
	\label{eq:detect_rule}
	\tau = \inf\{N \geq 1: P(\lambda \leq N|\mv^{1:N}) \geq 1-\alpha\},
\end{equation}
is asymptotically optimal \cite{tartakovsky2005general}.
\end{lemma}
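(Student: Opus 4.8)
The plan is to recognize the rule (\ref{eq:detect_rule}) as the classical Shiryaev procedure for Bayesian quickest change detection and to obtain the lemma from the general asymptotic optimality theory of \cite{tartakovsky2005general,pollak2009optimality}, after verifying that the Gaussian model of Section~\ref{sec:outage_detect} — in which the increments $\mv[n]$ are, conditionally on the regime, i.i.d.\ draws from $g$ before $\lambda$ and from $f$ afterwards — together with the geometric prior $\pi(\lambda)$ satisfies the hypotheses of that theory. Three items then need to be checked: (i) the stopping time $\tau$ meets the false-alarm constraint $P(\tau<\lambda)\le\alpha$; (ii) a universal lower bound holds on the conditional expected delay $E(\tau'-\lambda\mid\tau'\ge\lambda)$ of \emph{every} stopping time $\tau'$ meeting that constraint; and (iii) $\tau$ itself attains this bound as $\alpha\to 0$.

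Item (i) is essentially built into the construction. Since $\tau$ is a stopping time for the filtration generated by $\mv^{1:N}$, the event $\{\tau=N\}$ depends only on the first $N$ observations, and on it $P(\lambda\le N\mid\mv^{1:N})\ge 1-\alpha$, i.e.\ $P(\lambda>N\mid\mv^{1:N})\le\alpha$. Because $\lambda<\infty$ almost surely under the geometric prior, $\{\tau<\lambda\}\subseteq\{\tau<\infty\}$, so
\[
P(\tau<\lambda)=\sum_{N\ge1}E\!\left[\indic{\tau=N}\,P(\lambda>N\mid\mv^{1:N})\right]\le\alpha\sum_{N\ge1}P(\tau=N)\le\alpha,
\]
which is exactly the constraint in (\ref{eq:obj}).

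For items (ii) and (iii) the governing quantity is the Kullback--Leibler divergence $D(f\,\|\,g)=\int f\log(f/g)$ between the post- and pre-outage Gaussian laws; it is finite precisely because $\Sigma_0$ and $\Sigma_1$ are nonsingular (Assumption~\ref{ass:indept} provides nonzero variances), and this is the only point at which the concrete structure of $\Delta\mathbf{V}_\sset$ enters. Writing $Z_n=\log\!\big(f(\mv[n])/g(\mv[n])\big)$ for the per-sample log-likelihood ratio, the posterior odds $P(\lambda\le N\mid\mv^{1:N})/P(\lambda>N\mid\mv^{1:N})$ can be read off from (\ref{eq:post}) as a sum of exponentials of partial sums $\sum_{m=k}^{N}Z_m$ weighted by the geometric prior. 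Under the post-change measure those partial sums obey a strong law with slope $D(f\,\|\,g)$, while the geometric weights contribute an additional ``prior information rate'' $|\log(1-\rho)|$; consequently the threshold $1-\alpha$ is first exceeded after roughly $|\log\alpha|/\big(D(f\,\|\,g)+|\log(1-\rho)|\big)$ post-change samples. A change-of-measure (Wald-type) argument then gives the matching lower bound for every competing $\tau'$. Invoking \cite{tartakovsky2005general} yields
\[
E(\tau-\lambda\mid\tau\ge\lambda)=\frac{|\log\alpha|}{D(f\,\|\,g)+|\log(1-\rho)|}\,(1+o(1)),
\]
which to first order equals the infimum in (\ref{eq:obj}) — the assertion of the lemma. (Letting $\rho\to0$ recovers the familiar non-Bayesian rate $|\log\alpha|/D(f\,\|\,g)$.)

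The main obstacle is the upper bound in (iii): one must control the overshoot of the log-posterior-odds process at the first crossing time and rule out that the weighted-sum form in (\ref{eq:post}), as opposed to a single random walk, inflates the delay. This is precisely what the $r$-quick convergence / uniform integrability conditions on $\{Z_n\}$ under $f$ required by \cite{tartakovsky2005general} are designed to control, and for the Gaussian model they reduce to finiteness of moments of a quadratic form in a Gaussian vector, which holds automatically; the remaining work is bookkeeping with the geometric weights. Modulo these regularity checks, the lemma follows directly from the cited optimality results.
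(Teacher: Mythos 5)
Your proposal is correct and takes the same route the paper does: the paper offers no proof of this lemma at all, simply identifying the rule as the Shiryaev procedure and citing \cite{tartakovsky2005general} and \cite{pollak2009optimality} for its asymptotic optimality. Your reconstruction — the exact false-alarm bound via optional stopping on the posterior, and the matching upper/lower delay bounds governed by $D_\text{KL}(f\|g)+|\log(1-\rho)|$ — is precisely the standard argument behind that citation, and your asymptotic delay expression agrees with the paper's Lemma~\ref{thm:opt_delay}.
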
 
With Lemma~\ref{thm:detect_rule}, the threshold (brown dashed line) in Fig.~\ref{fig:detect_plot} is $1-\alpha$. Lemma~\ref{thm:opt_delay} shows the asymptotically optimal expected detection delay.
\begin{lemma}
	\label{thm:opt_delay}
	For a given probablity of false alarm $\alpha$, the detection rule in (\ref{eq:detect_rule}) achieves the asymptotically optimal detection delay
	\begin{equation}
	\label{eq:bound}
	D(\tau) = E(\tau - \lambda|\tau \geq \lambda) = \frac{|\log\alpha|}{-\log(1-\rho)+D_\text{KL}(f\|g)},
	\end{equation}
	as $\alpha \rightarrow 0$, where $D_\text{KL}(f\|g)$ is the Kullback-Leibler distance \cite{tartakovsky2008asymptotically}.
\end{lemma}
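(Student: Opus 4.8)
\medskip
\noindent\textbf{Proof plan.} The plan is to recognize the stopping rule (\ref{eq:detect_rule}) as Shiryaev's Bayesian change-detection procedure with a geometric prior, and then to invoke the asymptotic optimality theory of \cite{tartakovsky2005general,tartakovsky2008asymptotically} after checking that the Gaussian model of Assumption~\ref{ass:indept} meets the required regularity conditions. The false-alarm side is immediate and, in any case, already recorded in Lemma~\ref{thm:detect_rule}: at the stopping time $\tau$ one has $P(\lambda \le \tau \mid \mv^{1:\tau}) \ge 1-\alpha$ by construction, hence $P(\lambda > \tau \mid \mv^{1:\tau}) \le \alpha$, and taking expectations with the tower property gives $P(\tau < \lambda) = E\big[P(\lambda > \tau \mid \mv^{1:\tau})\big] \le \alpha$. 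So the real work is the delay.

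Next I would analyze the detection delay through the posterior odds $R_N = P(\lambda \le N \mid \mv^{1:N}) / P(\lambda > N \mid \mv^{1:N})$. With the geometric prior this statistic obeys a recursion of the form $R_N = (1-\rho)^{-1}(\rho + R_{N-1})\,\Lambda_N$, where $\Lambda_N = f(\mv[N])/g(\mv[N])$ is the per-sample likelihood ratio; taking logarithms and using $\log(\rho + R_{N-1}) = \log R_{N-1} + o(1)$ once $R_{N-1}$ is large, one obtains, conditionally on $\lambda = k$, $\log R_N = \sum_{n=k}^{N} Z_n + (N-k+1)\,|\log(1-\rho)| + \eta_N$, with $Z_n = \log\big(f(\mv[n])/g(\mv[n])\big)$ and $\eta_N$ a slowly varying remainder. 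Under the post-outage distribution the $Z_n$ are i.i.d.\ with $E[Z_n] = D_\text{KL}(f\|g)$, and because $f$ and $g$ are nondegenerate multivariate Gaussians, $Z_n$ is an affine-plus-quadratic function of jointly Gaussian variables and therefore has finite moments of every order; this supplies the $r$-quick (complete) convergence $\tfrac1n\sum_{i=1}^{n} Z_{k+i} \to D_\text{KL}(f\|g)$ that nonlinear renewal theory requires. The event $P(\lambda \le N \mid \mv^{1:N}) \ge 1-\alpha$ is the event $R_N \ge (1-\alpha)/\alpha$, i.e. $\log R_N \ge |\log\alpha|(1+o(1))$, and the post-change drift of $\log R_N$ is $D_\text{KL}(f\|g) + |\log(1-\rho)| = -\log(1-\rho) + D_\text{KL}(f\|g)$; hence the expected number of post-change observations needed to cross the threshold is $|\log\alpha| / \big(-\log(1-\rho) + D_\text{KL}(f\|g)\big)\,(1+o(1))$ as $\alpha \to 0$, which is (\ref{eq:bound}). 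A matching lower bound over all procedures satisfying $P(\tau < \lambda) \le \alpha$ follows from the general Bayesian information inequality in \cite{tartakovsky2005general}, and together with the upper bound this gives asymptotic optimality of the detection delay.

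The hard part will be the delay asymptotics rather than the false-alarm bound: one must show that the overshoot of $\log R_N$ past the threshold, and the remainder $\eta_N$, together contribute only $o(|\log\alpha|)$ to the expected delay. This is exactly where nonlinear renewal theory (Woodroofe-type overshoot control) and the uniform-integrability consequences of the Gaussian moment bounds are invoked; once those hypotheses are verified for the present model, formula (\ref{eq:bound}) follows directly from the cited asymptotic-optimality theorems.
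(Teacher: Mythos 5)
The paper offers no proof of this lemma at all: it is stated as a known result and delegated entirely to the citation \cite{tartakovsky2008asymptotically} (and implicitly \cite{tartakovsky2005general}), so there is no internal argument to compare against. Your sketch is a correct reconstruction of the proof that lives inside those references. The false-alarm bound via the tower property is right; the posterior-odds recursion $R_N = (1-\rho)^{-1}(R_{N-1}+\rho)\Lambda_N$ is exactly the Shiryaev recursion under the geometric prior, and taking logarithms correctly identifies the post-change drift as $D_\text{KL}(f\|g) + |\log(1-\rho)|$, which is where the denominator of (\ref{eq:bound}) comes from; the threshold $\log R_N \geq \log\frac{1-\alpha}{\alpha} = |\log\alpha| + O(\alpha)$ then yields the first-order delay, and the matching lower bound is indeed the general Bayesian information inequality of \cite{tartakovsky2005general}. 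You also correctly flag the genuinely delicate part — controlling the remainder $\log(1+\rho/R_{N-1})$ and the overshoot so that they contribute only $o(|\log\alpha|)$ in expectation — and correctly observe that the Gaussian log-likelihood ratio has all moments, which supplies the $r$-quick convergence hypothesis needed there. The one caveat worth stating explicitly is that the equality in (\ref{eq:bound}) must be read as asymptotic equivalence as $\alpha\to 0$ (the statement itself says so), and that the i.i.d.-per-regime structure of $\mv[n]$ assumed in Section~\ref{sec:outage_detect} is what makes the $Z_n$ i.i.d.\ post-change; with those understood, your plan matches the cited theorems and fills in the reasoning the paper leaves implicit.
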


In summary, when a new group of observation $\mv[n]$ is available from $\mu$PMUs, we can  compute the posterior probability according to (\ref{eq:post}) and then apply the optimal detection rule in (\ref{eq:detect_rule}) to diagnose the system. Therefore, the proposed algorithm can be implemented for real-time outage detection. As a highlight, the proposed approach does not require the grid topology.

\vspace{-2ex}
\subsection{Line Outage Detection with Unknown Outage Pattern}
Computing the posterior probability in (\ref{eq:post}) requires both distributions $g$ and $f$. The parameters of pre-outage distribution $g$ can be estimated using the historical measurements from $\mu$PMUs. For obtaining the post-outage distribution parameters, we need to know the outage pattern as a prior. One way is trying every possible outage pattern and identify the most similar one. However, this approach is infeasible because the outage patterns can grow exponentially with the grid size \cite{zhao2016efficient}. Also, many DERs in distribution grids are not operated by the utilities. Therefore, their topology information is usually unknown \cite{weng2017distributed}.

In this section, instead of searching the most likely post-outage distribution, we propose a method to learn $f$ from data using the maximum likelihood method. The computational complexity of our approach is insensitive to the number of out-of-service branches. To apply the maximum likelihood method, we need to compute the partial derivative of the posterior probability $P(\mathcal{H}_1|\Delta\mathbf{v}^{1:N})$. Unfortunately, $P(\mathcal{H}_1|\Delta\mathbf{v}^{1:N})$ is not a convex function and we may have multiple estimates. To address this challenge, we will provide an approximation of the posterior probability $P(\mathcal{H}_1|\Delta\mathbf{v}^{1:N})$. Specifically, the log-probability $\log P(\mathcal{H}_1|\Delta\mathbf{v}^{1:N})$ is
\begin{align}
&	\log P(\mathcal{H}_1|\Delta\mathbf{v}^{1:N}) \nonumber \\
=& \log C + \log\left\{\sum_{k=1}^N\pi(k)\prod_{n=1}^{k-1}g(\mv[n])\prod_{n=k}^{N}f(\mv[n]; \boldsymbol{\Theta})\right\}, \label{eq:log_post}
\end{align}
where $\boldsymbol{\Theta} = \{\mu_1, \Sigma_1\}$ represents the unknown parameters of $f$. In (\ref{eq:log_post}), the term within the braces is an expectation of $\prod_{n=1}^{k-1}g(\mv[n])\prod_{n=k}^{N}f(\mv[n]; \boldsymbol{\Theta})$ over the prior distribution $\pi$, $\E_\pi(\prod_{n=1}^{k-1}g(\mv[n])\prod_{n=k}^{N}f(\mv[n]; \boldsymbol{\Theta}))$. Also, the logarithmic function is convex. Therefore, we can apply the Jensen's inequality \cite{cover2012elements} to approximate $\log P(\mathcal{H}_1|\Delta\mathbf{v}^{1:N})$:
\begin{align}
	& \log P(\mathcal{H}_1|\Delta\mathbf{v}^{1:N}) \nonumber \\
	\geq & \log C + \sum_{k=1}^N\pi(k)\left(\sum_{n=1}^{k-1}\log g(\mv[n]) + \sum_{n=k}^{N} \log f(\mv[n]; \boldsymbol{\Theta})\right) \nonumber \\
	= & \widetilde{P}(\mathcal{H}_1|\Delta\mathbf{v}^{1:N}) \label{eq:approx_log_post}.
\end{align}
Since $\widetilde{P}(\mathcal{H}_1|\Delta\mathbf{v}^{1:N})$ is convex, by setting $\partial\widetilde{P}/\partial\mu_1 = 0$ and $\partial\widetilde{P}/\partial\Sigma_1 = 0$, the parameter $\boldsymbol{\Theta}$ is estimated as
\begin{align}
	\widehat{\mu}_1 & = \frac{\sum_{k=1}^N \pi(k)\sum_{n=k}^N \mv[n]}{\sum_{k=1}^N \pi(k) (N-k+1)}, \label{eq:mu_est} \\
	\widehat{\Sigma}_1 & = \frac{\sum_{k=1}^N \pi(k)\sum_{n=k}^N (\mv[n] - \widehat{\mu}_1)(\mv[n] - \widehat{\mu}_1)^T}{\sum_{k=1}^N \pi(k) (N-k+1)} \label{eq:sigma_est}.
\end{align}
The details of (\ref{eq:mu_est}) and (\ref{eq:sigma_est}) are given in Appendix~\ref{sec:parm_est}. With the estimates of $\mu_1$ and $\Sigma_1$, we can compute the posterior probability in (\ref{eq:post}) and apply the optimal detection rule in (\ref{eq:detect_rule}).

\vspace{-2ex}
\section{Line Outage Identification}
\label{sec:outage_identify}
Identifying which line has an outages is important in the urban distribution grid operation. In metropolitan areas, many branches are underground and not well documented. Therefore, an efficient and accurate outage localization approach can reduce the power interruption time significantly. In the following part, we will propose a real-time outage localization method based on the voltage measurements from $\mu$PMUs.

\begin{lemma}
\label{thm:cond_cov}
    Assuming random vectors $\mathbf{X}$, $\mathbf{Y}$, and $\mathbf{Z}$ follow Gaussian distributions, given $\mathbf{Z} = \mathbf{z}$, if $\mathbf{X}$ and $\mathbf{Y}$ are conditionally independent, their conditional covariance is zero \cite{hastie2015statistical}. 
\end{lemma}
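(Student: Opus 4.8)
The plan is to unwind the definition of conditional independence and observe that the conditional cross-covariance is just the difference between a conditional cross-moment and a product of conditional means, both of which factorize under the independence hypothesis.

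First I would record what "conditionally independent given $\mathbf{Z}=\mathbf{z}$" means at the level of laws, namely that the conditional density factors, $p(\mathbf{x},\mathbf{y}\mid\mathbf{z}) = p(\mathbf{x}\mid\mathbf{z})\,p(\mathbf{y}\mid\mathbf{z})$ for (almost) every $\mathbf{z}$. Since $(\mathbf{X},\mathbf{Y},\mathbf{Z})$ is jointly Gaussian, each of these conditional distributions is again Gaussian with finite first and second moments, so every expectation written below is well defined and finite; this is the only role the Gaussian hypothesis plays in this direction.

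Next I would compute the conditional cross-moment directly. Using the factorization, $\E[\mathbf{X}\mathbf{Y}^T\mid\mathbf{Z}=\mathbf{z}] = \int\!\!\int \mathbf{x}\mathbf{y}^T\, p(\mathbf{x}\mid\mathbf{z})\,p(\mathbf{y}\mid\mathbf{z})\,d\mathbf{x}\,d\mathbf{y} = \bigl(\E[\mathbf{X}\mid\mathbf{Z}=\mathbf{z}]\bigr)\bigl(\E[\mathbf{Y}\mid\mathbf{Z}=\mathbf{z}]\bigr)^T$. Subtracting the product of the conditional means then yields $\mathrm{Cov}(\mathbf{X},\mathbf{Y}\mid\mathbf{Z}=\mathbf{z}) = \E[\mathbf{X}\mathbf{Y}^T\mid\mathbf{Z}=\mathbf{z}] - \E[\mathbf{X}\mid\mathbf{Z}=\mathbf{z}]\,\E[\mathbf{Y}\mid\mathbf{Z}=\mathbf{z}]^T = \mathbf{0}$, which is exactly the assertion.

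There is no genuinely hard step: the entire content is the factorization of the conditional density. The only points requiring care are (i) phrasing "conditionally independent" as a statement about conditional laws rather than about a single outcome, and (ii) invoking integrability of the second moments, which joint Gaussianity supplies automatically. I would also add a remark that, for jointly Gaussian vectors, the converse is true as well — a vanishing conditional cross-covariance block forces the conditional density to factor — since it is this equivalence (zero partial correlation $\Leftrightarrow$ conditional independence) that is actually exploited later when localizing the outage from $\mu$PMU data.
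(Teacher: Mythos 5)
Your argument is correct and complete: conditional independence means the conditional density factorizes, so the conditional cross-moment equals the product of conditional means and the conditional covariance vanishes; joint Gaussianity is needed only to guarantee the second moments exist. The paper itself offers no proof of this lemma --- it simply cites Hastie et al. --- so there is nothing to compare against; your verification is the standard one. Your closing remark is the most valuable part: the direction actually exploited in Section IV for localization is the converse (a vanishing off-diagonal block of $\Sigma_{\mathcal{I}|\mathcal{J}}$ certifies conditional independence and hence an out-of-service branch), and that converse holds precisely because the conditional law of a jointly Gaussian vector is again Gaussian, for which uncorrelatedness and independence coincide; it would be worth stating that equivalence explicitly rather than only the forward implication.
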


Because of Theorem~\ref{thm:cond_indept}, the voltage changes at the two ends of the out-of-service branches are conditionally independent after an outage. Due to Lemma~\ref{thm:cond_cov}, we can compute the conditional covariance matrix of every possible pair of buses in the network and check if the off-diagonal term changes to zero. When the off-diagonal term changes to zero, we can identify the out-of-service branches. 

Usually, the conditional covariance can be estimated based on the voltage measurements. However, a large set of post-outage data is required to have an accurate estimation, and the delay of localization is long. To enable real-time outage localization based on $\mu$PMUs, we use the covariance matrix $\Sigma$ to compute the conditional covariance alternatively. This approach allows us to localize the outage even if we do not know the distribution grid topology. In the case that the post-outage probability distribution $f$ is unknown, we can use $\widehat{\Sigma}_1$ in (\ref{eq:sigma_est}) to compute the conditional covariance. For bus $i$ and bus $j$, suppose $\mathcal{I} = \{i,j\}$ and $\mathcal{J} = \mathcal{S}\backslash\{i,j\}$, the covariance of the joint Gaussian distribution can be decomposed as
\[
\Sigma = \begin{bmatrix}
	\Sigma_{\mathcal{I}\mathcal{I}} & \Sigma_{\mathcal{I}\mathcal{J}} \\ 
	\Sigma^T_{\mathcal{I}\mathcal{J}} & \Sigma_{\mathcal{J}\mathcal{J}}\end{bmatrix}.
\]
The conditional covariance matrix can be computed by the Schur complement \cite{boyd2004convex}, i.e.,
\begin{equation}
	\label{eq:cond_cov}
	\Sigma_{\mathcal{I}|\mathcal{J}} = \Sigma_{\mathcal{I}\mathcal{I}} - \Sigma_{\mathcal{I}\mathcal{J}}\Sigma_{\mathcal{J}\mathcal{J}}^{-1}\Sigma^T_{\mathcal{I}\mathcal{J}}.
\end{equation}
If the voltages at bus $i$ and bus $j$ are conditionally independent, the off-diagonal term of $\Sigma_{\mathcal{I}|\mathcal{J}}$ is zero, i.e. $\Sigma_{\mathcal{I}|\mathcal{J}}(1,2) = \Sigma_{\mathcal{I}|\mathcal{J}}(2,1) = 0$. Therefore, we can compare the conditional covariance of every bus pairs before and after an outage. If the conditional covariance changes to zero after an outage, we localize one line outage. This computation can be repeated when $\widehat{\Sigma}_1$ is updated based on the latest available measurements. In Section~\ref{sec:num}, we illustrate the similar performances using the true post-outage covariance matrix $\Sigma_1$ and the estimated covariance matrix $\widehat{\Sigma}_1$.

Let $\widehat{\mathcal{E}}_\text{outage}$ denote the estimated out-of-service branch set. We summarize the proposed outage identification algorithm in Algorithm~\ref{alg:outage}.
\vspace{-2.5mm}
\begin{algorithm}[h!]
\caption{Distribution Grid Line Outage Identification}
\label{alg:outage}
\begin{algorithmic}[1]
\STATE At each time $N$:
\IF {parameters of post-outage distribution $f$ are unknown}
    \STATE estimate $\widehat{\mu}_1$ and $\widehat{\Sigma}_1$ using (\ref{eq:mu_est}) and (\ref{eq:sigma_est})
\ENDIF
\STATE Compute $P(\mathcal{H}_1|\mv_\sset^{1:N})$ by (\ref{eq:post}).
\IF {$P(\mathcal{H}_1|\mv_\sset^{1:N}) \geq 1-\alpha$}
    \STATE Report an outage event and $\tau = N$
    \STATE Compute $\Sigma_{\mathcal{I}|\mathcal{J}}$ by (\ref{eq:cond_cov}) using $\Sigma_1$ for every pair of buses 
    \IF {$\Sigma_{\mathcal{I}|\mathcal{J}} = 0$ for $\mathcal{I} = \{i,j\}$}
        \STATE Report the branch between bus $i$ and bus $j$ is out-of-service
    \ENDIF
\ENDIF
\end{algorithmic}
\end{algorithm}
\vspace{-4mm}

    
    


\vspace{-2ex}
\section{Line Outage Identification with Limited $\mu$PMU Deployment}
\label{sec:pmu}
Due to recent investment on AMI infrastructure, smart meters have been widely installed. However, $\mu$PMUs are only installed at selected buses because of its high cost. The line outage detection and localization algorithm proposed in the previous sections assumes that every bus has a $\mu$PMU. In this section, we will extend the algorithm to two particular cases:
\begin{itemize}
	\item Combining $\mu$PMUs and smart meters data to identify line outages, and
	\item Only using $\mu$PMUs data to identify line outages.
\end{itemize}

\subsection{Using $\mu$PMUs and smart meters data to identify line outages}

In distribution grids, the sampling frequencies of $\mu$PMU and smart meters are usually different. For $\mu$PMU, the sampling frequency is usually $120$Hz. For smart meters, the data collection period depends on the device but is usually larger than one minute per sample. To utilize the $\mu$PMU and smart meters together, we can perform the sequential test when both $\mu$PMU and smart meter data are updated. Specifically, given $\widetilde{\mv}_\sset[n] = \{\mv_{\mathcal{A}}[n],|\Delta\mathbf{v}_{\mathcal{B}}[n]|\}$, we compute $P(\mathcal{H}_1|\widetilde{\mv}_\sset[n])$ and apply the decision rule in (\ref{eq:detect_rule}) to diagnose distribution grids, where $\mathcal{A}$ denotes the set of buses that are installed with $\mu$PMU and $\mathcal{B}$ denotes the set of buses that only have smart meters. If we assume the phase angles of the smart meter buses are zero, Lemma~\ref{thm:cond_indept} still holds on these buses \cite{liao2018urban}. Therefore, we can apply the same method in (\ref{eq:cond_cov}) to identify out-of-service branches.
\vspace{-2ex}
\subsection{Only using $\mu$PMUs data to identify line outages}
\label{sec:pmu_localize}
The sampling frequency of the smart meter is usually much lower than that of $\mu$PMU. If we wait until all smart meters data are available, the detection delay will increase significantly. Even if we only need one sample ($\mv_\sset[n]$) to report outage events, the detection delay is as long as several minutes. In order to quickly detect service-critical outages, we extend the proposed approach to only utilize $\mu$PMU measurements of a subset of bus.

\begin{corollary}
\label{thm:pmu}
	When a distribution grid has a limited amount of $\mu$PMU installed, we can still detect line outages using the method in (\ref{eq:detect_rule}).
\end{corollary}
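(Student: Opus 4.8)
The plan is to show that the entire change-point construction of Section~\ref{sec:outage_detect} is closed under marginalization, so that restricting the observations to the set $\mathcal{A}\subseteq\mathcal{S}$ of buses equipped with $\mu$PMUs leaves every ingredient of Lemma~\ref{thm:detect_rule} intact. First I would record that, by Assumption~\ref{ass:indept} and the discussion following Theorem~\ref{thm:cond_indept}, $\Delta\mathbf{V}_\mathcal{S}$ is jointly Gaussian both before and after an outage, with laws $g=\mathcal{N}(\mu_0,\Sigma_0)$ and $f=\mathcal{N}(\mu_1,\Sigma_1)$. A Gaussian vector has Gaussian marginals, so the sub-vector $\Delta\mathbf{V}_\mathcal{A}$ obeys $g_\mathcal{A}=\mathcal{N}(\mu_{0,\mathcal{A}},\Sigma_{0,\mathcal{A}\mathcal{A}})$ for $N\le\lambda$ and $f_\mathcal{A}=\mathcal{N}(\mu_{1,\mathcal{A}},\Sigma_{1,\mathcal{A}\mathcal{A}})$ for $N>\lambda$, where the parameters are obtained by deleting from $\mu_0,\mu_1,\Sigma_0,\Sigma_1$ the rows and columns indexed by $\mathcal{S}\backslash\mathcal{A}$. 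The process actually observed by the available $\mu$PMUs is then $\{\mv_\mathcal{A}[n]\}$, i.i.d.\ $g_\mathcal{A}$ before the outage and i.i.d.\ $f_\mathcal{A}$ after it.

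Second, I would verify that the outage is still detectable in this reduced view, i.e.\ $g_\mathcal{A}\neq f_\mathcal{A}$. Using the nodal relation $\Delta V_i=(\Delta I_i+\sum_{e\in\mathcal{N}(i)}\Delta V_eY_{ie})/Y_{ii}$ from the proof of Theorem~\ref{thm:cond_indept}, removing a branch changes $Y$ and hence the second-order statistics of $\Delta\mathbf{V}_\mathcal{S}$; because a feeder is connected, this perturbation propagates through the current--voltage equations to every bus, so generically the restricted parameters $(\mu_{1,\mathcal{A}},\Sigma_{1,\mathcal{A}\mathcal{A}})$ differ from $(\mu_{0,\mathcal{A}},\Sigma_{0,\mathcal{A}\mathcal{A}})$ and therefore $D_\text{KL}(f_\mathcal{A}\|g_\mathcal{A})>0$. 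This step is the main obstacle: one must exclude the pathological case in which the outage is statistically invisible on $\mathcal{A}$. I would handle it not by a blanket claim but by stating a mild placement condition --- e.g.\ that $\mathcal{A}$ contains at least one bus whose row of $\Sigma$ is altered by the branch removal, which holds in particular when an endpoint of the outaged branch carries a $\mu$PMU --- and noting that the map from $Y$ to $\Sigma$ has no structural zeros along connected paths, so the condition holds for essentially all placements of interest, the exceptional algebraic cancellations forming a measure-zero set in line-parameter space.

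Third, with $g_\mathcal{A}$ and $f_\mathcal{A}$ in hand, the posterior recursion (\ref{eq:post}) holds verbatim after the substitutions $g\mapsto g_\mathcal{A}$, $f\mapsto f_\mathcal{A}$, $\mv[n]\mapsto\mv_\mathcal{A}[n]$, and the Shiryaev--Roberts--Pollak argument behind Lemma~\ref{thm:detect_rule} and Lemma~\ref{thm:opt_delay} applies unchanged, since it uses only that the pre- and post-change laws are fixed and mutually absolutely continuous. Hence the rule $\tau=\inf\{N\ge1:P(\lambda\le N\mid\mv_\mathcal{A}^{1:N})\ge1-\alpha\}$ is again asymptotically optimal as $\alpha\to0$, with expected detection delay $|\log\alpha|/\bigl(-\log(1-\rho)+D_\text{KL}(f_\mathcal{A}\|g_\mathcal{A})\bigr)$; the only price for using fewer sensors is the smaller Kullback--Leibler term, which inflates but keeps finite the delay, and when $\mathcal{A}=\mathcal{S}$ the bound reduces to (\ref{eq:bound}). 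This establishes the claim of Corollary~\ref{thm:pmu}.
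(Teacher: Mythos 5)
Your proposal is correct, and it reaches the same conclusion as the paper by a genuinely different route. The paper's proof is circuit-theoretic: it partitions the admittance matrix over $\mathcal{A}$ (metered) and $\mathcal{B}$ (unmetered) buses, forms the Kron/Schur reduction $\widetilde{\mathbf{Y}} = \mathbf{Y}_{\mathcal{A},\mathcal{A}} - \mathbf{Y}_{\mathcal{A},\mathcal{B}}\mathbf{Y}^{-1}_{\mathcal{B},\mathcal{B}}\mathbf{Y}^T_{\mathcal{A},\mathcal{B}}$ so that $\mathbf{I}_\mathcal{A} = \widetilde{\mathbf{Y}}\mathbf{V}_\mathcal{A}$, and argues that an outage perturbs at least one block and hence $\widetilde{\mathbf{Y}}$, so the observed process changes law and the change-point machinery still applies. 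You instead work purely at the level of the joint Gaussian, noting that marginals of Gaussians are Gaussian and that the restricted laws $g_\mathcal{A}, f_\mathcal{A}$ inherit the pre/post-change structure, then rerun the Shiryaev--Roberts--Pollak argument with the reduced Kullback--Leibler term. The two are really dual views of the same fact (the marginal covariance of $\Delta\mathbf{V}_\mathcal{A}$ is governed by the reduced admittance), but each buys something: the paper's reduction yields a concrete effective network that it later reuses to explain why conditional covariances change without vanishing (the localization discussion in Section~\ref{sec:pmu_localize}), while your marginalization connects directly to the delay formula and makes explicit that the cost of fewer sensors is a smaller $D_\text{KL}(f_\mathcal{A}\|g_\mathcal{A})$, which matches the paper's empirical observation in Fig.~\ref{fig:PMU}. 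You are also more careful than the paper on the one real gap shared by both arguments: the paper simply asserts that $\widetilde{\mathbf{Y}}$ ``will be different'' after an outage, whereas you flag the possibility of an outage that is statistically invisible on $\mathcal{A}$ and dispose of it with an explicit placement/genericity condition. That extra step is not in the paper but strengthens the claim rather than departing from it.
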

\begin{IEEEproof}
Let $\mathcal{A}$ denote the set of buses that are installed with $\mu$PMU, $\mathcal{B}$ denote the set of buses that do not have $\mu$PMU and $\mathcal{A} \cup \mathcal{B} = \sset$. We can partition a distribution grid as follows \cite{weng2017geometric}:
\[
\begin{bmatrix}
	\mathbf{I}_{\mathcal{A}} \\ 
	\mathbf{I}_{\mathcal{B}}
\end{bmatrix}
= \begin{bmatrix}
	\mathbf{Y}_{\mathcal{A},\mathcal{A}} & \mathbf{Y}_{\mathcal{A},\mathcal{B}} \\
	\mathbf{Y}_{\mathcal{B},\mathcal{A}} & \mathbf{Y}_{\mathcal{B},\mathcal{B}}
\end{bmatrix}
\begin{bmatrix}
	\mathbf{V}_{\mathcal{A}} \\ 
	\mathbf{V}_{\mathcal{B}}	
\end{bmatrix}.
\]
Then, the distribution grid can be reduced to $\mathbf{I}_{\mathcal{A}} = \widetilde{\mathbf{Y}}\mathbf{V}_{\mathcal{A}}$, where $\widetilde{\mathbf{Y}} = (\mathbf{Y}_{\mathcal{A},\mathcal{A}} - \mathbf{Y}_{\mathcal{A},\mathcal{B}}\mathbf{Y}^{-1}_{\mathcal{B},\mathcal{B}}\mathbf{Y}^T_{\mathcal{A},\mathcal{B}})$. If an outage event occurs, one of the matrices $Y_{\mathcal{A},\mathcal{A}}$, $Y_{\mathcal{A},\mathcal{B}}$, and $Y_{\mathcal{B},\mathcal{B}}$ will be changed. Therefore, $\widetilde{\mathbf{Y}}$ will be different and we can still apply the optimal change point detection method presented in the previous section to detect outages.
\end{IEEEproof}
Similar to the previous applications, we want to localize outage after detection. Unfortunately, the conditional independence property cannot be applied here to identify the out-of-service branch. The reason is that the out-of-service branch may not change the element in $\widetilde{\mathbf{Y}}$ to zero. Therefore, we cannot apply the method in Section~\ref{sec:outage_identify} to localize outage. Alternatively, we propose a two-stage approach to narrow down the potential outage area and help the crew members and engineers to reduce searching time.

Although the conditional covariance between two buses will not become zero after an outage event, its value will still change significantly due to the change of admittance. This observation is inspired by the real data simulation from PG\&E. For example, Fig.~\ref{fig:8bus_loop} illustrates a 8-bus system with loops. For this system, we only install $\mu$PMUs on bus $2$, $4$, $5$, and $8$. Fig.~\ref{fig:pmu_outage_localization} shows the change of conditional correlation before and after branch $2-6$ is out-of-service. We can observe that the conditional correlation between bus $2$ and bus $5$ has the most significant change. Therefore, as the first stage, we can still compute the pairwise conditional covariance and find the ones that have significant change as the line outage candidates.

\putFig{8bus_loop}{An 8-bus system. A node represents a bus and a line represents a branch. The dashed lines are additional branches with the same admittance as the branch connected bus $7$ and bus $8$.}{0.5\linewidth}

\begin{figure}[htbp]
    \centering
    \subfloat[Pre-outage\label{fig:pre-outage-pmu}]{
    \includegraphics[width=0.47\linewidth]{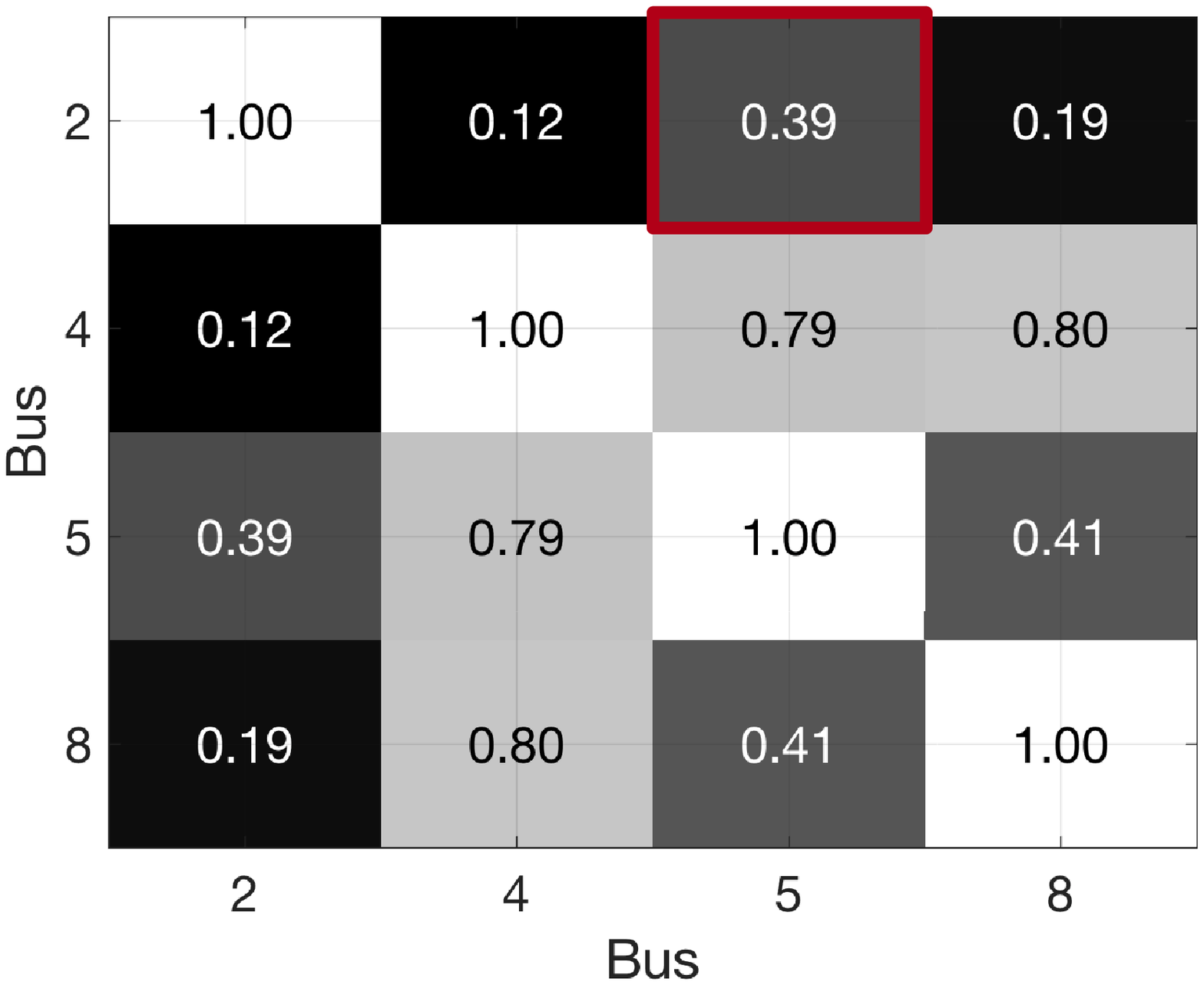}
    }
    \hfill
    \subfloat[Post-outage\label{fig:post-outage-pmu}]{
    \includegraphics[width=0.47\linewidth]{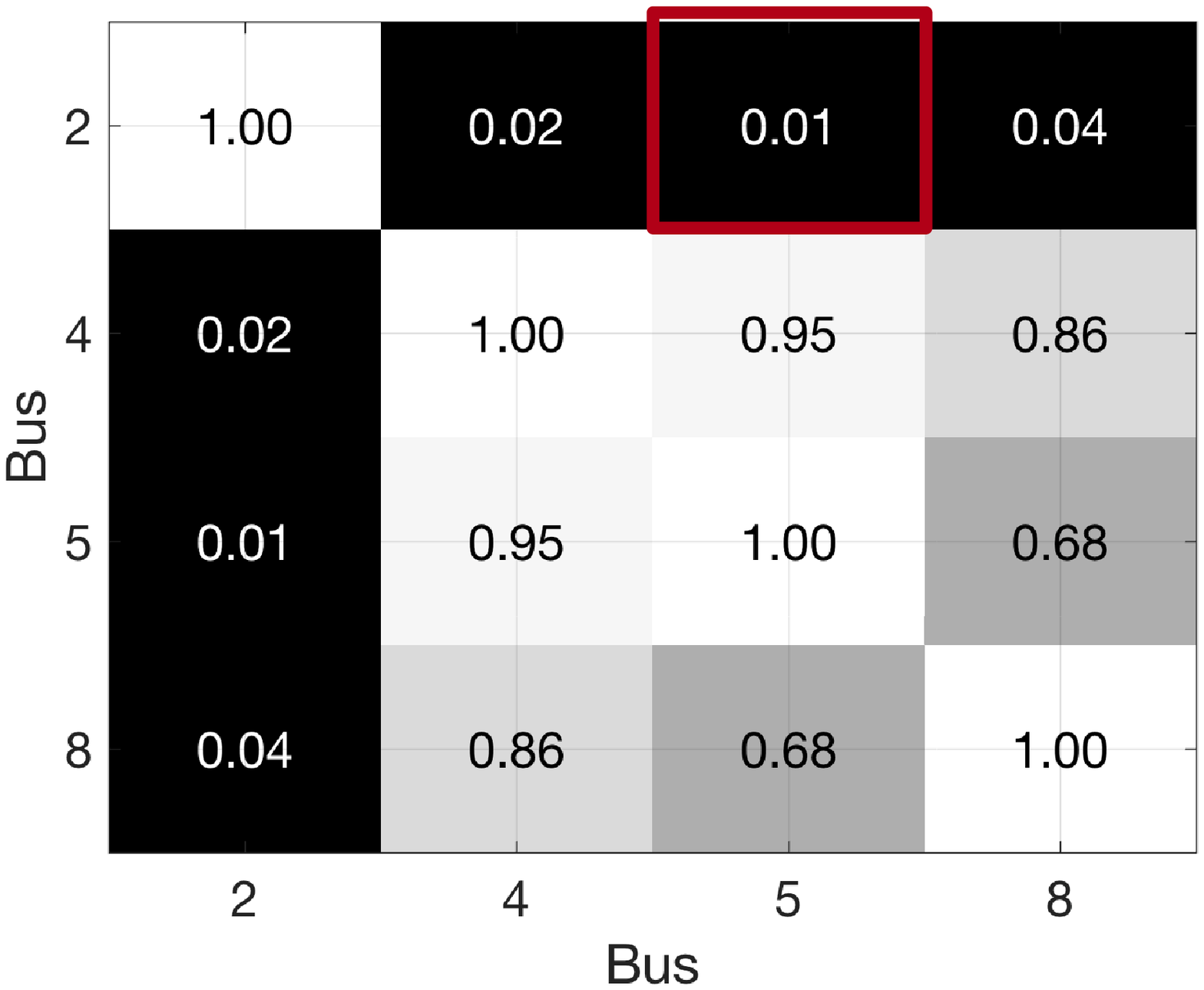}
    }
    \caption{Absolute conditional correlation before and after an outage (Branch 2-6). Only bus $2$, $4$, $5$, and $8$ have $\mu$PMUs.}
    \label{fig:pmu_outage_localization}
    \vspace{-3ex}
\end{figure}


For the newly built distribution grids, the distribution line configurations (e.g., conductor geometric mean radius, resistivity of each in Ohm-meter) and line lengths are available and accurate. By utilizing (\ref{eq:VI}), we can estimate the line admittance before and after outages over the line outage candidates, which are found from the previous step.  Then, we can apply modified Carson's equations to identify possible outage branch length and narrow down the searching area. Computing (\ref{eq:VI}) requires to know all bus connections. We can either use the utility records or the existing distribution grid topology estimation methods \cite{liao2018urban,weng2017distributed,deka2015structure}.

\vspace{-2ex} 
\section{Simulation and Results}\label{sec:num}
The simulations are implemented on the IEEE PES distribution networks for IEEE $8$-bus and $123$-bus networks \cite{kersting2001radial} and six European distribution grids \cite{pretticodistribution}. To validate the performance of the proposed approach on loopy networks, we add several branches to create loops in all systems. The loopy $8$-bus system is shown in Fig.~\ref{fig:8bus_loop}. For $123$-bus system, we add a branch between bus 77 and bus 120 and the other branch between bus 50 and bus 56. The admittance are the same as the branch between bus 122 and bus 123. For European systems, the loopy modifications are detailed in \cite{liao2018urban}. In each network, bus $1$ is selected as the slack bus. The historical data have been preprocessed by the MATLAB Power System Simulation Package (MATPOWER) \cite{Zimmerman10}. 

To simulate the power system behavior in a more realistic pattern, we build $\mu$PMU measurements via interpolation based on real power profile of distribution grids from Pacific Gas and Electric Company (PG\&E) in the subsequent simulation. This profile contains anonymized and secure hourly smart meter readings over $110,000$ PG\&E residential customers for one year spanning from $2011$ to $2012$. The reactive power $q_i[n]$ at bus $i$ and time $t$ is computed according to a randomly generated power factor $pf_i[n]$, which follows a uniform distribution, e.g. $pf_i[n] \sim \Unif(0.8,1)$. The results presented in the this section are based on the complex voltage from $\mu$PMU. 
To obtain measurements form voltage phasors at time $n$, i.e. $v_i[n]$, we run a power flow to generate the states of the power system. To obtain time-series data, we run the power flow to generate fast scale $\mu$PMU data over a year. 


In this simulation, we considered three common outage scenarios:
\begin{itemize}
    \item A loopy network. In this system, after an outage, most buses will not have zero voltages because they can receive powers from multiple branches. This outage scenario usually happens in the urban areas.
    \item A radial network with DERs. In this case, some buses will be disconnected from the main grid. However, if they are connected with DERs, their voltages will not be zero. This outage case is a typical scenario in the residential areas. 
    \item A radial network without DERs. In this case, when a line outage occurs, some buses will be disconnected from the main grid and have zero voltage magnitudes. Because the bus voltages have no variation after outages, our method can quickly detect and localize this type of outages.
\end{itemize}
\vspace{-2ex}
\subsection{Outage Detection in Loopy Systems}
Fig.~\ref{fig:8bus_loop_det} illustrates the complimentary posterior probability $1-P(\mathcal{H}_1|\mv^{1:N})$ for detecting two line outages in loopy 8-bus system (Fig.~\ref{fig:8bus_loop}) based on $\mu$PMU data. In this test, Branches $3$-$4$ and $2$-$6$ have outages. The false alarm rate $\alpha$ is $10^{-6}$. For the complimentary posterior probability, the threshold is $\alpha = 10^{-6}$.
To have a better understanding of how our proposed outage detection algorithm works, we assign a uninformative parameter for the prior distribution, i.e. $\rho = 10^{-4}$. The outage time is $\lambda = 21$. When the parameters of post-outage distribution are known, the complimentary posterior probability immediately drops blow the threshold at $N=21$. When the parameters are unknown, one more time step is required to achieve detectable probability. Given that the $\mu$PMU data are collected at a fast time scale, this delay is  insignificant in field applications.

\putFig{8bus_loop_det}{Complimentary posterior probability for outage detection. The branches $3$-$4$ and $2$-$6$ have outage. $\alpha = 10^{-6},\rho = 10^{-4}$.}{0.8\linewidth}

In Fig.~\ref{fig:123busloop_delay}, the expected delay divided by $|\log\alpha|$ is plotted as a function of $|\log\alpha|$ for two cases: $f$ is known and $f$ is unknown. We also show the limiting value of the normalized asymptotically optimal detection delay as predicted by Lemma~\ref{thm:opt_delay}. All plots are generated by Monte Carlo simulation over $1,000$ replications. In this simulation, the prior distribution of outage time $\lambda$ has a geometric probability distribution with parameter $\rho = 0.04$. The start time of test is randomly selected within one year. In Fig.~\ref{fig:123busloop_delay}, our approach, which learns the parameters of the post-outage distribution from the voltage measurements, has identical performances as the optimal method that has known $f$. Also, our approach can achieve the optimal expected detection delay asymptotically. As shown in Fig.~\ref{fig:123busloop_delay}, when the false alarm rate $\alpha$ is small, our approach can report the outage immediately (i.e. detection delay is less than one time slot), which can significantly reduce the impacts of power outages.
 
\putFig{123busloop_delay}{Plots of the slope $\frac{1}{|\log\alpha|}E[\tau-\lambda|\tau\geq \lambda]$ against $|\log\alpha|$ for outage detection for loopy $123$-bus system. False alarm rate $\alpha$ ranges in $[0.5,10^{-20}]$. Branch $73$-$74$ has an outage.}{0.8\linewidth}

\subsection{Outage Detection in Systems with DERs}
In a radial distribution grid, a line outage will lead to several isolated islands. However, with the integration of DERs, such as solar panels and batteries, some buses can still receive powers. In loopy systems, the continuous power supply from DERs also makes the outage detection difficult. In this section, we simulate the line outage in IEEE 8-bus and 123-bus systems and six European medium- and low-voltage distribution systems based on $\mu$PMU data \cite{pretticodistribution, liao2018urban}. Similar to the previous section, we randomly selected the start time within one year. Also, we select a few buses in the distribution grid to have solar power generator with a battery as the storage. Thus, there is a power supply during the entire day. If the battery is unavailable, the outage can be directly detected when the nodal voltages are zero. For the solar panel, we use the hourly power generation profile computed by PVWatts Calculator, an online application developed by the National Renewable Energy Laboratory (NREL) \cite{dobos2014pvwatts}. The solar power generation profile are computed based on the weather history in North California and the physical parameters of ten 5kW solar panels. The power factor is fixed as $0.90$ lagging, which satisfies the regulation of many U.S. utilities \cite{ellis2012review} and IEEE standard \cite{ieee2014guide}.

\begin{table}[h]
\caption{Average Detection Delay (Time Step) of Line Outage Detection in Distribution Grids with DERs. $\alpha = 10^{-5}$.}
\centering
	\begin{tabular}{|c||c|c|c|c|c|}
	\hline
	System & Total & Total & $\Delta \mathbf{V}_\sset$ & $|\Delta \mathbf{V}_\sset|$ \\
	& Branches & PV & (1 min) & (15 min) \\
	\hline
	8-bus & 7 & 8 & 0.12 & 0.12 \\
	\hline
	8-bus, 2 loops & 9 & 8 & 0.13 & 0.15 \\
	\hline
	123-bus & 122 & 12 & 4.62 & 5.77 \\
	\hline
	123-bus, 2 loops & 124 & 12 & 4.53 & 5.89 \\
	\hline
	\textit{LV\_suburban} & 114 & 10 & 3.81 & 6.01 \\
	\hline
	\textit{LV\_suburban} & 114 & 20 & 3.99 & 6.01 \\
	\hline
	\textit{LV\_suburban} & 114 & 33 & 4.23 & 6.12 \\
	\hline
	\textit{MV\_urban} & 34 & 7 & 1.11 & 2.02  \\
	\hline
	\textit{MV\_urban} & 35 & 7 & 1.11 & 1.29 \\
	switch 34-35, 1 loop &&&& \\ 
	\hline
	\textit{MV\_urban} & 37 & 7  & 1.12 & 1.29 \\
	3 switches, 3 loops &&&& \\ 
	\hline
	\textit{MV\_two\_stations} & 46 & 10 & 0.92 & 1.33 \\
	\hline
	\textit{MV\_two\_stations} & 48 & 10 & 0.87 & 1.35 \\
	2 switches, 2 loops &&&& \\
	\hline
	\textit{MV\_rural} & 116 & 20 & 1.13 & 2.44 \\
	\hline
	\textit{MV\_rural} & 119 & 20 & 1.98 & 3.01 \\
	3 switches, 3 loops &&&& \\
	\hline
	\textit{Urban} & 3237 & 300 & 12.89 & 30.23 \\
	\hline
	\textit{LV\_large}, 465 loops & 4030 & 300 & 34.29 & 194.09 \\
	\hline
	\end{tabular}
\label{tab:detection_DER}
\end{table}

Table~\ref{tab:detection_DER} summarizes the average detection delay in eight distribution grids with $14$ configurations. In each network, we evaluate the detection performance using $\mu$PMU ($\Delta \mathbf{V}_\sset$) and smart meters ($|\Delta \mathbf{V}_\sset|$). The sampling frequency of $\mu$PMUs is $1$ minute to demonstrate the a relative faster metering speed when compared to $15$ minutes for smart meters. However, the result can easily be generalized for faster $\mu$PMU speed. For small networks, the smart meters require one to three more observations than the $\mu$PMU. The number of additional observations is small. But considering the sampling frequencies are not the same, the $\mu$PMU reports outage events much faster than the smart meters. When the distribution grid is large (e.g., \textit{Urban} and \textit{LV\_large} systems), the $\mu$PMU requires much less samples to detect an outage event due to the significant change of phase angles.

\subsection{Line Outage Localization}
The changes of voltage measurements are small in many cases. Thus, the conditional covariance is hard to visualize. Alternatively, we show the absolute conditional correlation in this section for $\mu$PMU data. For bus $i$ and bus $j$, their conditional covariance matrix is denoted as $\Sigma_{\mathcal{I}|\mathcal{J}}$, where $\mathcal{I} = \{i,j\}$ and $\mathcal{J} = \mathcal{S}\backslash\{i,j\}$. The conditional correlation between bus $i$ and bus $j$ is defined as
\[
\rho_{i,j} = \frac{\Sigma_{\mathcal{I}|\mathcal{J}}(1,2)}{\sqrt{\Sigma_{\mathcal{I}|\mathcal{J}}(1,1) \times \Sigma_{\mathcal{I}|\mathcal{J}}(2,2)}}.
\]
When a branch has an outage, the conditional correlation becomes zero. Fig.~\ref{fig:outage_line} shows the absolute conditional correlation $|\rho_{i,j}|$ of the loopy $8$-bus system in Fig.~\ref{fig:8bus_loop} after branch 3-4 and branch 2-6 have outages. The red boxes indicate the branches that have outages. When the post-outage distribution $f$ is known, the true $\Sigma_1$ is used to compute the conditional correlation. Comparing Fig.~\ref{fig:pre} and \ref{fig:post}, clearly, the absolute conditional corrections of outage branches change to zero after outages. The diagonal terms are the self-correlation and equal to one. This observation indicates that this proposed outage localization method is sensitive to outages and validates our proof in Theorem~\ref{thm:cond_indept}. When $f$ is unknown, by comparing Fig.~\ref{fig:pre} and \ref{fig:post-unknown}, we can still identify the outage lines. Therefore, the proposed $\mu$PMU-based method can still localize the out-of-service branches as accurate as the optimal approach.

\begin{figure} 
    \centering
  \subfloat[Pre-outage\label{fig:pre}]{%
       \includegraphics[width=0.48\linewidth]{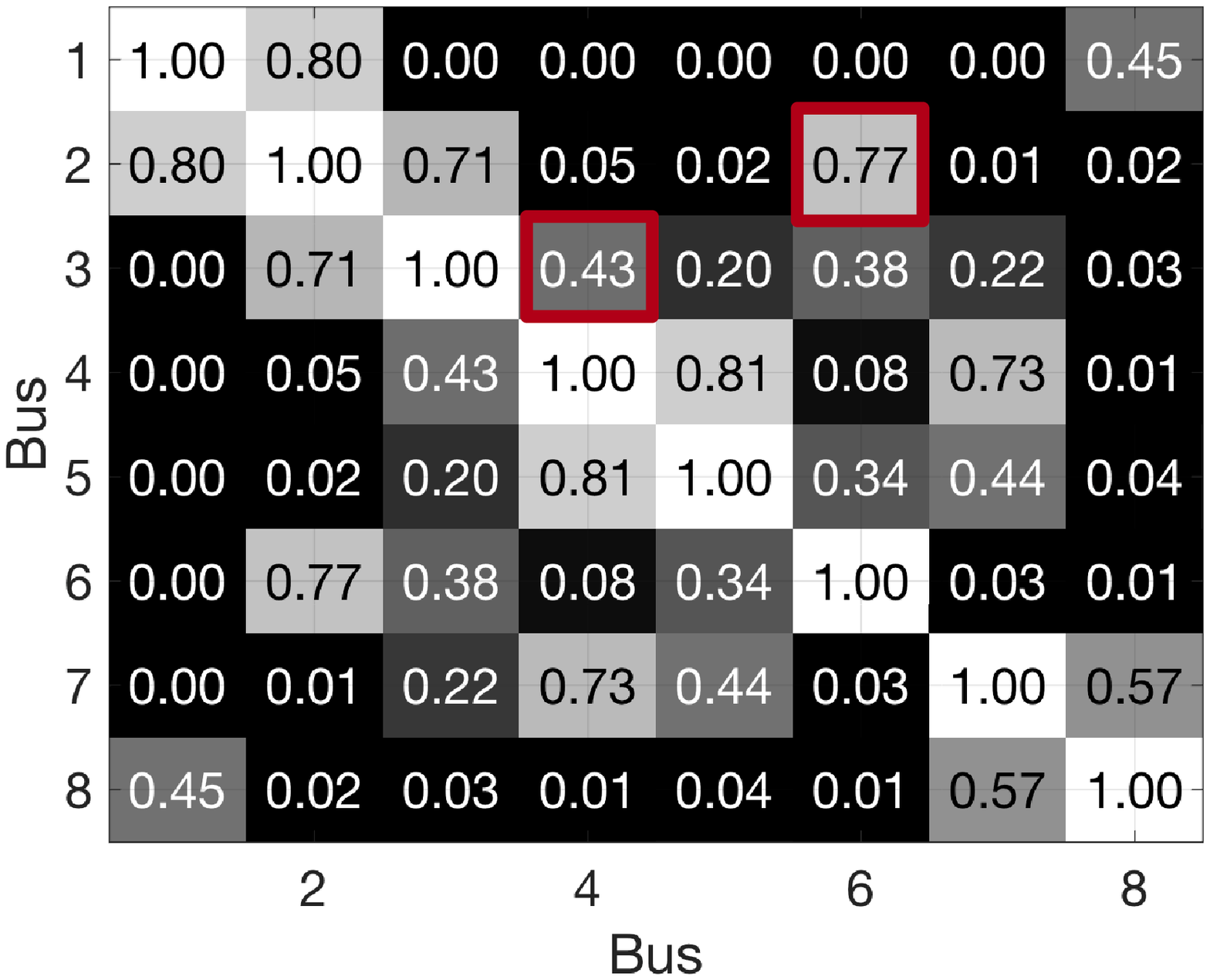}}
    \hfill
  \subfloat[Post-outage\label{fig:post}]{%
        \includegraphics[width=0.48\linewidth]{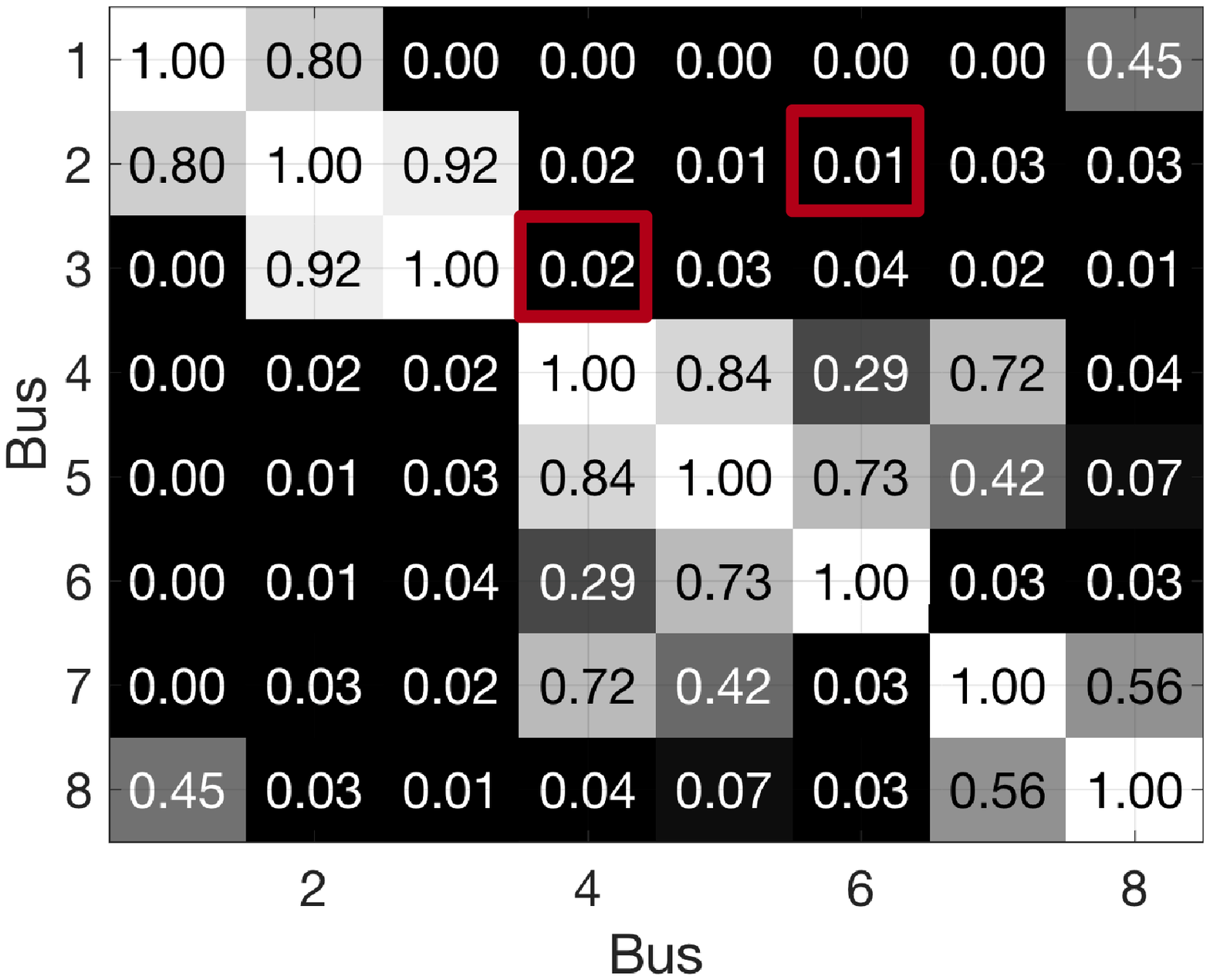}}
\\
  \subfloat[Post-outage with unknown distribution\label{fig:post-unknown}]{%
        \includegraphics[width=0.48\linewidth]{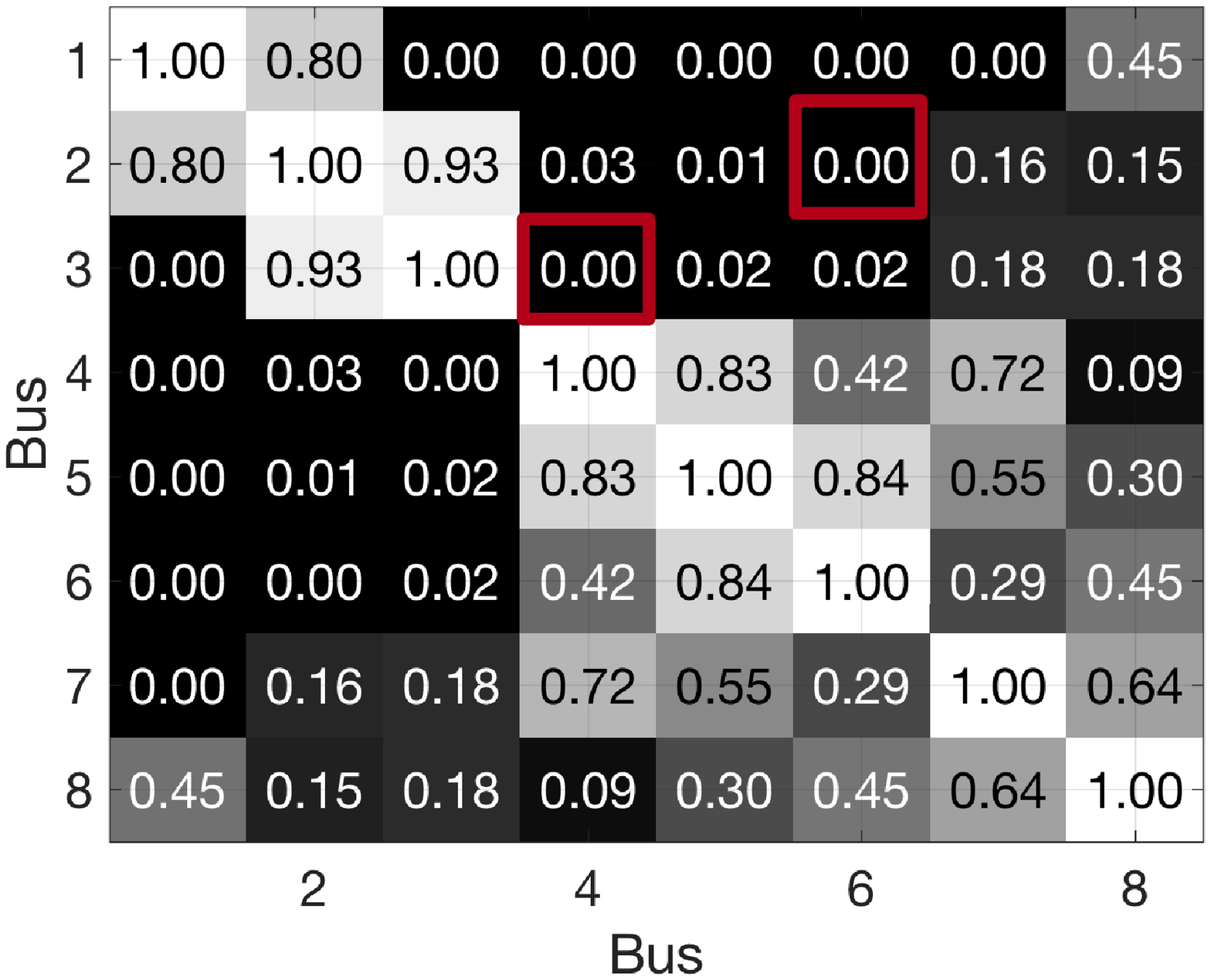}}
  \caption{Absolute conditional correlation of $8$-bus system before (a) and after (b \& c) an outage (Branches $3$-$4$ and $2$-$6$).}
    \label{fig:outage_line} 
    \vspace{-4ex}
\end{figure}

\vspace{-2ex}
\subsection{Line Outage Detection with Limited $\mu$PMU Deployment}
In previous simulation sections, we have demonstrated that the proposed algorithm can quickly identify out-of-service branches if every bus has a $\mu$PMU installed. However, as discussed in Section~\ref{sec:pmu}, requiring every bus has a $\mu$PMU is difficult in today's distribution grids. We use Corollary~\ref{thm:pmu} to prove that the optimal method in (\ref{eq:detect_rule}) can still be applied with a subset of PMUs. In this section, we will compare the performance with different number of $\mu$PMUs in the system. Fig.~\ref{fig:PMU} presents the detection delays for different number of $\mu$PMUs in the $8$-bus system. The sampling period is 1 minute. Clearly, the detection delay becomes longer with the decrease of $\mu$PMUs in the system. The reason is that with less $\mu$PMUs in the system, the KL distance becomes smaller and the detection delay in (\ref{eq:bound}) increases. However, considering the smart meters have a sampling period of 15 minutes, $\mu$PMUs still have a significant advantage when over four $\mu$PMUs are deployed in the system. For the  $123$-bus system, if we have $15$ $\mu$PMUs installed in the system, the detection delay is less than $10$ minutes. For line outage localization, we have demonstrated the alternative process in Section~\ref{sec:pmu_localize}.

\putFig{PMU}{Detection delays for different numbers of $\mu$PMUs in the $8$-bus system.}{0.75\linewidth}

\vspace{-2ex}
\section{Conclusion}\label{sec:con}
In this paper, we propose a $\mu$PMU-based data-driven algorithm to automatically detect and identify outages in distribution grids with increasing renewable penetration. Specifically, we develop a stochastic modeling of $\mu$PMU data stream and propose a change point detection approach based on the probability distribution changes because of outage events. As a highlight, unlike existing approaches, our method requires neither the grid topology nor the damage pattern as a prior. This leads to wide applicability of our proposed $\mu$PMU-based method than existing methods. In addition to outage detection, we provide theoretical prove that optimal identification can be achieved due to the conditional independence of voltages based on the power flow analysis. We verify the proposed algorithm on eight distribution grid systems with and without DERs. From extensive simulations with $\mu$PMU data, our algorithm can perfectly detect and identify outages in a short time, with and without the integration of DERs, thanks to $\mu$PMU's fast speed and high data quality. Also, with a small coverage of $\mu$PMUs, our algorithm can still quickly report outages.

\appendix
\subsection{Unknown Post-Outage Distribution Parameters Estimation}
\label{sec:parm_est}
In this section, we will prove how to estimate the unknown parameters of the distribution $f$. Since $g$ and $f$ are multi-variate Gaussian distributions, (\ref{eq:approx_log_post}) can be written as
\begin{align*}
    & P(\mathcal{H}_1|\mv^{1:N}) 
    = \log C + \sum_{k=1}^N\frac{-\pi(k)}{2} \cdot \\
    &\left(
    \sum_{n=1}^{k-1}\log|2\pi\Sigma_0| + (\mv[n] - \mu_0)^T\Sigma_0^{-1}(\mv[n] - \mu_0) \right. \\
    &+ \left. \sum_{n=k}^{N}\log|2\pi\Sigma_1| + (\mv[n] - \mu_1)^T\Sigma_1^{-1}(\mv[n] - \mu_1)
    \right).
\end{align*}
To estimate $\mu_1$, we have
\[
\frac{\partial P(\mathcal{H}_1|\mv^{1:N})}{\partial \mu_1} = \sum_{k=1}^N \frac{-\pi(k)}{2}\sum_{n=k}^N(\mv[n]-\mu_1)\Sigma_1^{-1} = 0.
\]
Since
\[
    \sum_{n=k}^N(\mv[n]-\mu_1) = \left(\sum_{n=k}^N\mv[n] - (N-k+1)\mu_1\right),
\]
the estimate of $\mu_1$ is
$
\widehat{\mu}_1 = \frac{\sum_{k=1}^N\pi(k)\sum_{n=k}^N\mv[n]}{\sum_{n=k}^N\pi(k)(N-k+1)}.
$

For the covariance matrix $\Sigma_1$, the partial derivative is 
\[
   \frac{\partial P(\mathcal{H}_1|\mv^{1:N})}{\partial \Sigma_1} = \sum_{k=1}^N\frac{-\pi(k)}{2}\left(\sum_{n=k}^N S[k] - (N-k+1)\Sigma_1\right)
\]
where $S[k] = \sum_{n=k}^N(\mv[n] - \mu_1)(\mv[n] - \mu_1)^T$. Letting $\mu_1 = \widehat{\mu}_1$ and $\partial P(\mathcal{H}_1|\mv^{1:N})/\partial \Sigma_1 = 0$, the covariance matrix estimate is 
$
\widehat{\Sigma}_1 = \frac{\sum_{k=1}^N\pi(k)S[k]}{\sum_{k=1}^N\pi(k)(N-k+1)}.
$

\bibliographystyle{IEEEtran}
\bibliography{ref}
\end{document}